\definecolor{MyDarkBlue}{rgb}{0,0.08,0.45}
\definecolor{yellow}{rgb}{0.99,0.99,0.70}
\definecolor{white}{rgb}{0.796,0.948,0.816}
\definecolor{black}{rgb}{0.00,0.00,0.00}
\def\diag{\textrm{diag}}
\newtheorem{Lemma}{$\mathbf{Lemma}$}
\newtheorem{Corollary}{$\mathbf{Corollary}$}
\newtheorem{Remark}{Remark}
\newtheorem{theorem}{$\mathbf{Theorem}$}
\newtheorem{proof}{Proof}
\begin{document}
\title{\huge Ergodic Secrecy Rate of RIS-Assisted Communication Systems   in the Presence of Discrete Phase Shifts \\and Multiple Eavesdroppers}

\author{Peng Xu,  \IEEEmembership{Member, IEEE},
Gaojie Chen,  \IEEEmembership{Senior Member, IEEE},
Gaofeng Pan,  \IEEEmembership{Senior Member, IEEE}, \\and Marco Di Renzo, \IEEEmembership{Fellow, IEEE}

\thanks{
P. Xu is  with  Chongqing Key Laboratory of Mobile Communications Technology,  School of   Communication and Information Engineering,  Chongqing University of Posts and Telecommunications, Chongqing, 400065,  China.
(e-mail: xupeng@cqupt.edu.cn).

G. Chen is with School of Engineering, University of Leicester, Leicester LE1 7RH, U.K. (e-mail: gaojie.chen@leicester.ac.uk).

G. Pan is with the School of Information and Electronics Engineering, Beijing Institute of Technology, Beijing 100081, China, and he is also with Computer, Electrical and Mathematical Sciences and Engineering Division, King Abdullah University of Science and Technology (KAUST), Thuwal 23955-6900, Saudi Arabia.  (e-mail: gaofeng.pan.cn@ieee.org).

M. Di Renzo is with Universit\'e Paris-Saclay, CNRS and CentraleSup\'elec, Laboratoire des Signaux et Syst\`emes,  Gif-sur-Yvette, France. (e-mail: marco.direnzo@centralesupelec.fr).
}\vspace{-2em}}
\maketitle

\begin{abstract}
This letter investigates the ergodic secrecy rate (ESR) of a reconfigurable intelligent surface (RIS)-assisted communication system  in the presence of discrete phase shifts and multiple eavesdroppers (Eves).
In particular, a closed-form approximation  of the ESR is derived  for both non-colluding and colluding Eves. The analytical results are shown to be accurate when the number of reflecting elements of the RIS $N$ is large.
{Asymptotic analysis is  provided to
investigate the impact of $N$  on the ESR, and it is proved that the
ESR scales with $\log_2 N$  for
both non-colluding and colluding Eves.}
Numerical results are provided to verify the analytical results and the obtained scaling laws.
\vspace{-0.5em}\end{abstract}

\begin{IEEEkeywords}
Reconfigurable intelligent surface, discrete phase shifts,  multiple eavesdroppers, ergodic secrecy rate.
\end{IEEEkeywords}

\vspace{-1em}
\section{Introduction}
Reconfigurable intelligent surfaces (RISs) utilize a large number of passive reflecting elements to customize wireless communication environments \cite{IRS_Marco_Eurasip,IRS_Marco_Smart,Marco2019access,IRS_Marco_Relay}.
 Due to the low-cost, high energy-efficiency and full-duplex advantages, RISs are regarded as a promising technology for next-generation wireless communications and hence have recently received significant
  academic and industrial attention \cite{wu2019intell,Liu2020RIS_Surveys,pan2020multicell,pan2020intell}.


RISs have various potential applications in wireless communications, which include the design of secure wireless systems based on the concept of physical layer security
(e.g.,
\cite{IRS_Security_Cui,IRS_Security_Chu,IRS_Security_AN,
IRS_Security_Jasc,wang2020intell_spl,yang2020vt}). In
\cite{IRS_Security_Cui,IRS_Security_Chu,IRS_Security_AN,IRS_Security_Jasc},
the authors
investigated optimization problems to jointly design the beamforming vectors and phase shifts at
the transmitter and RIS, respectively. In general, there exist  two objectives  for the design of the phase
shifts at the RIS: (i) to strengthen the legitimate channels by co-phasing the
reflected signals with the signal directly received from the transmitter; and   (ii) to
suppress the eavesdropping channels by setting the reflected signals at the eavesdroppers
(Eves) to be in opposite phase with respect to the signal from the transmitter. The key idea behind  the
optimization problems in existing works \cite{IRS_Security_Cui,IRS_Security_Chu,IRS_Security_AN,IRS_Security_Jasc}  lies in  achieving a favorable trade-off
between these two design objectives, which requires the knowledge of the instantaneous eavesdropping
channel state information (CSI) at the transmitter and RIS. However, the instantaneous
eavesdropping CSI is difficult  to obtain in practice, since the Eves are usually
passive and do not actively communicate with other nodes. Motivated by this consideration,
 the authors  of \cite{wang2020intell_spl} and \cite{yang2020vt} considered  RIS-assisted secrecy
  communications without assuming the knowledge of the instantaneous eavesdropping CSI.


Different from these existing works,
this letter investigates the ergodic secrecy rate (ESR) of RIS-assisted systems
  in the presence of discrete phase shifts and multiple Eves. 
In particular, by approximately characterizing the  distribution
of  the  received  signal-to-noise-ratios  (SNRs)  at  the Eves,
 we obtain a closed-form approximation of the ESR for both non-colluding and colluding Eves.
 The analysis of the ESR, in fact,  is essentially  different from the analysis of  the ergodic rate
  without security constraints \cite{IRS_Error_Capacity}, since the phase shifts at the RIS lead to a different impact on the intended receiver and Eves.
  Moreover, based on passive beamforming,  the received SNRs at the destination and
  Eves  depend on  phase quantization  errors and cascaded channels, that  are
different from those in  massive multiple-input multiple-output (MIMO) systems.
 In order to provide insights, asymptotic analysis is also provided, which shows that
 the ESR scales with $\log N$  for both non-colluding and colluding Eves.
 Numerical results are illustrated  to verify that the analytical results  are accurate for large values of $N$.

\textit{Notation}: $\mathbb{C}$ and $\mathbb{Z}$ denote the  complex domain and integer set, respectively;   we denote $[1:M]\triangleq\{1,\ldots,M\}$, where $M$ is a positive integer,  and $[x]^+\triangleq\max\{0,x\}$;  $\mathcal{CN}$ denotes the complex Gaussian distributions; $\mathbb{E}[\cdot]$ denotes the  expectation of a random variable; $\log(\cdot)$ and $\ln(\cdot)$ denote the base-two  and natural logarithms, respectively; and $\kappa$ is Euler's constant.



\vspace{-1.5em}
\section{System Model and Preliminaries}\label{section_model}
We consider an RIS-assisted secure communication system  with a source ($S$), an RIS ($R$) with $N$ reconfigurable  elements, a destination ($D$) and $K$ Eves ($E_k$, $\forall k\in[1:K]$).
The reconfigurable elements of the RIS are arranged in a uniform array of tiny antennas spaced half of the wavelength apart.
All nodes are assumed to be equipped with a single
antenna\footnote{RIS-aided transmission has several  applications when multiple antennas are  not
 available at either the transmitter or  the receiver, e.g., in device-to-device communications
 \cite{IRS_Marco_Smart}.}. The channels $S\rightarrow D$,  $S\rightarrow E_k$, $S\rightarrow R$, $R\rightarrow D$ and $R\rightarrow E_k$ are denoted by $h_{SD}\in\mathbb{C}$, ${h}_{SE_k}\in\mathbb{C}$,
$\mathbf{h}_{SR}\in\mathbb{C}^{N\times1}$,  $\mathbf{h}_{RD}\in\mathbb{C}^{N\times1}$
and $\mathbf{h}_{k}\in\mathbb{C}^{N\times1}$, respectively.
These channels are modeled as $h_{SD}=g_{SD}d_{SD}^{-\frac\alpha2}$,  ${h}_{SE_k}=g_{SE_k}d_{SE_k}^{-\frac\alpha2}$,
$[\mathbf{h}_{SR}]_n=g_{SR,n}d_{SR}^{-\frac\alpha2}$,  $[\mathbf{h}_{RD}]_n=g_{RD,n}d_{RD}^{-\frac\alpha2}$
and $[\mathbf{h}_{k}]_n=g_{k,n}d_{k}^{-\frac\alpha2}$, where $g_{SD},~g_{SE_k},~g_{SR,n},~g_{RD,n},~
g_{k,n}\sim \mathcal{CN}(0,1)$ denote the small-scale
fading\footnote{
Since the elements of the RIS are spaced half of the wavelength apart and we assume that the location of the RIS cannot be optimized to ensure  strong line-of-sight links, we have, as a first approximation similar to \cite{yang2020vt,IRS_Error_Capacity,IRS_Marco}, that the channels can be modeled as independent and identically distributed, and follow a Rayleigh distribution.
},
$d_{SD}$, $d_{SE_k}$, $d_{SR}$,
  $d_{RD}$ and $d_{k}$ denote the distances  $S\rightarrow D$,  $S\rightarrow E_k$, $S\rightarrow R$,
$R\rightarrow D$ and $R\rightarrow E_k$, respectively
, $\forall n\in[1:N],~k\in[1:K]$, and $\alpha$ is the pass-loss exponent.  Then,
 the received signal at $D$ and $E_k$ can be written as

\vspace{-1em}
{\small
\begin{align}
  y_D&=\sqrt{P}\big(\eta\mathbf{h}_{SR}^T\boldsymbol{\Phi}\mathbf{h}_{RD}+h_{SD}\big)x_S+n_D,\label{y_d}\\
  y_{E_k}&=\sqrt{P}\big(\eta\mathbf{h}_{SR}^T\boldsymbol{\Phi}\mathbf{h}_{k}+h_{SE_k}\big)x_S+n_{E_k},\label{y_ek}
\end{align}}
respectively, where $x_S$ is the transmitted signal, $\mathbb{E}(|x_S|^2)=1$,  $P$ is the transmit power,
$n_D$ and $n_{E_k}\sim\mathcal{CN}(0,\delta^2)$ 
are the additive white
Gaussian noises at $D$ and $E_k$, respectively,
$\eta\in(0,1]$ is the
amplitude reflection coefficient,  $\boldsymbol{\Phi}\triangleq \diag(e^{j\phi_1},\ldots,e^{j\phi_N})$
and $\phi_n\in[0,2\pi)$ is the phase shift of the $n$th element of  the RIS.

{We assume that the RIS does not have access to  the instantaneous  eavesdropping CSI, so that
it cannot design $\phi_n$  in order to suppress the received SNRs at the Eves.
However,  the RIS is assumed to know the  instantaneous legitimate CSI.
 Under these assumptions,  the optimal value  of $\phi_n$
  that maximizes the received SNR at $D$ is {\small{$\phi_n^*=\theta_{SD}\!-\theta_{SR,n}\!-\theta_{RD,n}$}}, where
$\theta_{SD}$, $\theta_{SR,n}$ and $\theta_{RD,n}$ denote the phases of $g_{SD}$, $g_{SR,n}$ and $g_{RD,n}$,
respectively.}
Due to hardware limitations,  $\phi_n$ can only take a finite number of discrete values.
In particular, the set of discrete phase shifts is denoted by {\small{$\mathcal{F}\!\triangleq\!\left\{0,\frac{2\pi}{2^b},\ldots,\frac{(2^b-1)2\pi}{2^b}\right\}$}},
where $b$ denotes the  number of quantization bits.
Accordingly, we set $\phi_n\!=\!f_1(\phi_n^*)$, where the function $f_1(\phi_n^*)$
maps $\phi_n^*$ to the nearest point in $\mathcal{F}$, i.e., 
\begin{equation}\small\label{f_1}
  f_1(\phi_n^*)=\hat{\phi_i},\textrm{ if }|\phi_n^*-\hat{\phi_i}|\leq |\phi_n^*-\hat{\phi_j}|,\
\hat{\phi_i},\hat{\phi_j}\in\mathcal{F},\  \forall j\neq i.
\end{equation}
Therefore, the phase quantization error is $\Theta_n=
f_1(\phi_n^*)-\phi_n^*$, which is uniformly distributed in $\left[-\frac{\pi}{2^b},\frac{\pi}{2^b}\right]$,
 similar to \cite{IRS_Error_Capacity,IRS_diversity}.
According to \eqref{y_d} and \eqref{y_ek},
the received SNRs at $D$ and $E_k$ can be formulated, respectively, as follows

\vspace{-1em}
{\small{\begin{align}\label{gamma_D}
  \gamma_D&={\rho\left||h_{SD}|+\eta\sum_{n=1}^N\left|[\mathbf{h}_{SR}]_n[\mathbf{h}_{RD}]_n\right|e^{j\Theta_n}
  \right|^2}\notag\\
  &=\rho\left|d_{SD}^{-\frac\alpha2}|g_{SD}|+\eta d_{SR}^{-\frac\alpha2}d_{RD}^{-\frac\alpha2}
  \sum_{n=1}^N|g_{SR,n}g_{RD,n}|e^{j\Theta_n}\right|^2,\\
  \gamma_{E_k}&=\rho\left|h_{SE_k}+\eta\sum_{n=1}^N\left|[\mathbf{h}_{SR}]_n[\mathbf{h}_{k}]_n\right|
e^{j\psi_{k,n}}\right|^2\notag\\
 & =\rho\left|d_{SE_k}^{-\frac\alpha2}g_{SE_k}+\eta d_{SR}^{-\frac\alpha2} d_{k}^{-\frac\alpha2}\sum_{n=1}^N\left|g_{SR,n}g_{k,n}\right|
e^{j\psi_{k,n}}\right|^2,\label{gamma_Ek}
\end{align}}}
\hspace{-0.3em}where 
$\psi_{k,n}\triangleq f_2(\phi_n^*,\theta_{SR,n})+\theta_{k,n}$, $\theta_{k,n}$ is the phase of $g_{k,n}$
and the function $f_2(\phi_n^*,\theta_{SR,n})$ is defined as follows
\begin{equation}\small\label{f_2}
f_2(\phi_n^*,\theta_{SR,n})\triangleq
f_1(\phi_n^*)+\theta_{SR,n}.
\end{equation}
The ESR\footnote{We assume that the RIS appropriately customizes the wireless channel but we consider that the distribution of the signal transmitted by $S$ is always Gaussian. It is worth mentioning that the information-theoretic characterization of RIS-assisted transmission and the calculation of the optimal input distribution in the presence of an RIS is an open issue that is currently under active research \cite{Marco_ISIT}. This is, however, beyond the scope of this letter.}
 can be expressed as follows 
\begin{equation}\small\label{R_s}
  R_s=[R_D-R_E]^+,
\end{equation}
where $R_D=\mathbb{E}_{\gamma_D}[\log(1+\gamma_D)]$ and  $R_E$ denote the ergodic rates  from $S$ to $D$ and
from $S$ to the Eves, respectively.
Given $\{\Theta_n\}_{n=1}^{N}$, an approximated  expression of $R_D$ can be found in \cite[Eq. (13)]{IRS_Error_Capacity}. By averaging  over $\{\Theta\}_{n=1}^N$, $R_D$ can be calculated as shown in \eqref{R_D} at the top of the next page,  where $A_1\triangleq \eta^2d_{SR}^{-\alpha}d_{RD}^{-\alpha}$, $A_2\triangleq\frac{\sqrt{\pi}\eta2^b}4d_{SD}^{-\frac\alpha2}d_{SR}^{-\frac\alpha2}d_{RD}^{-\frac\alpha2}
 \sin\frac\pi {2^b}$ and $A_3\triangleq \frac{\eta^22^{2b}}{32}d_{SR}^{-\alpha}d_{RD}^{-\alpha}
  \left(1\!-\!\cos\frac{2\pi}{2^b}\right)$.

\begin{figure*}
{\small{
\begin{align}\label{R_D}
  R_D&\!\approx \!\log\left(\!1\!+\!\rho\bigg( \!N\eta^2d_{SR}^{-\alpha}d_{RD}^{-\alpha}\!+\!d_{SD}^{-\alpha}
  \!+\!\frac{\pi^{\frac32}\eta}4d_{SD}^{-\frac\alpha2}d_{SR}^{-\frac\alpha2}d_{RD}^{-\frac\alpha2}
  \sum_{n=1}^N\mathbb{E}_{\Theta_n}[\cos\Theta_n] \!+\!\frac{\pi^{2}\eta^2}8d_{SR}^{-\alpha}d_{RD}^{-\alpha}\sum_{i=1}^{N\!-1}
  \!\sum_{k=i\!+\!1}^N\!\mathbb{E}_{\Theta_i,\Theta_k}[\cos(\Theta_k\!-\!\Theta_i)]\bigg)\!\right)\notag\\
 & =\log\left(1+\rho NA_1+\rho d_{SD}^{-\alpha}
  +\rho NA_2+\rho N(N-1)A_3\right),
\end{align}}}
\vspace{-1em}
\hrule
\end{figure*}

\begin{Remark}
  The analysis of the ESR for the considered RIS-assisted system relies only  on the
  knowledge of the statistical eavesdropping CSI, which can be obtained by using several methods, such as those used in \cite{stati_CSI2}.
\end{Remark}

In the following sections, $R_E$  is calculated for non-colluding and colluding Eves, respectively.
\vspace{-1em}
\section{Non-Colluding Eves}\label{section_capacity_nonC}
In the non-colluding case, $R_E$ can be expressed as follows
\begin{equation}\small\label{R_ek}
  R_{E}=\max_{k\in[1:K]}R_{E_k},
  \end{equation}
  where  $R_{E_k}\triangleq\mathbb{E}_{\gamma_{E_k}}[\log(1+\gamma_{E_k})]$.
  In order to derive $R_{E_k}$, the distribution of $\gamma_{E_k}$ in \eqref{gamma_Ek} needs to be computed.
  \vspace{-1em}
  \subsection{Distribution of $\gamma_{E_k}$}\label{subsection_gamma_ek}
  Before deriving the distribution of $\gamma_{E_k}$, we introduce the following lemma.
     \begin{Lemma}\label{lemma_phase}
The phase $\psi_{k,n}$, $k\in[1:K]$, $n\in[1:N]$,  in \eqref{gamma_Ek} has the following properties:
  \begin{enumerate}
    \item[a)] $\psi_{k,n}$ is uniformly distributed in $[0,2\pi)$;
   \item[b)] $\psi_{k,n}$ is independent of $f_2(\phi_n^*,\theta_{SR,n})$ defined in \eqref{f_2};
    \item[c)] $\psi_{k,i}$ is  independent of $\psi_{k,j}$, $\forall i\neq j,~i,j\in[1:N]$.
  \end{enumerate}
  \end{Lemma}
\begin{proof}
See Appendix \ref{proof_lemma_phase}.
\end{proof}


Based on Lemma \ref{lemma_phase}, the distribution of $\gamma_{E_k}$ in \eqref{gamma_Ek}
is provided in the following lemma.
\begin{Lemma}\label{lemma_gamma_ek}
When $N$ is large,   $\gamma_{E_k}$  can be approximated with
an exponential random  variable with  mean $\lambda_{E_k}=\rho\big(d_{SE_k}^{-\alpha}+NB_k
\big)$, where $B_k\triangleq \eta^2
d_{SR}^{-\alpha} d_{k}^{-\alpha}$. 
\end{Lemma}
\begin{proof}
 Define $G_k\triangleq\sum_{n=1}^N\left|g_{SR,n}g_{k,n}\right|
e^{j\psi_{k,n}}$, $k\in[1:K]$. Based on \cite[Lemma 2]{IRS_NOMA_Ding} and the fact that
 $\{\psi_{k,n}\}_{n=1}^N$ are  independent and identically distributed
uniform random variables in $[0,2\pi)$ as proved in Lemma \ref{lemma_phase},
$G_k\sim \mathcal{CN}(0,N)$  as $N\rightarrow\infty$.
Furthermore, since $g_{SE_k}$ is independent of $G_{k}$, we have
 \begin{equation}\small{
 d_{SE_k}^{-\frac\alpha2}g_{SE_k}\!+\!\eta d_{SR}^{-\frac\alpha2} d_{k}^{-\frac\alpha2}G_{k}
 \sim \mathcal{CN}\left(0,d_{SE_k}^{-\alpha}\!+\!NB_k\right)},\notag\end{equation}
  as $N\rightarrow\infty$. Recalling that
$\small \gamma_{E_k}=\rho\left|d_{SE_k}^{-\frac\alpha2}g_{SE_k}+\sqrt{B_k}G_{k}
\right|^2$  in \eqref{gamma_Ek}, the proof follows.
\end{proof}
\begin{Remark}
 The authors of  \cite{IRS_Error_Capacity} approximated $R_D$ based on the fact  that
 Jensen's inequality is tight, rather than
an upper   bound, 
if $\frac{{\rm Var}[\gamma_D]}{\mathbb{E}^2[\gamma_D]}\to 0$, as $N\to \infty$. However, Lemma \ref{lemma_gamma_ek} shows that
 Jensen's inequality cannot be applied for approximating  
$R_E$, since $\frac{{\rm Var}[\gamma_{E_k}]}{\mathbb{E}^2[\gamma_{E_k}]}\to 1$, as $N\to \infty$.
\end{Remark}
  \vspace{-1em}
\subsection{Ergodic Secrecy Rate}
The ESR for non-colluding Eves is summarized in the following theorem.

  \begin{theorem}\label{theorem_non}
    When $N$ is large, the ESR for  non-colluding Eves can be expressed as follows
\begin{equation}\small\label{R_s_n}
R_s\thickapprox\left[R_D+\frac{1}{\ln2}\max_{k\in[1:K]}{e^{\frac1{\lambda_{E_k}}}} {\rm Ei}\left(-\frac1{\lambda_{E_k}}\right)\right]^+,
  \end{equation}
where   $\rm Ei(x)\triangleq -\int_{-x}^\infty\frac{e^{-t}}t{\rm d}t,$ $x<0$, is the exponential integral function \cite[Eq. 8.211]{gradsh2000table}.
  \end{theorem}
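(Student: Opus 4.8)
The plan is to start from the definition $R_s=[R_D-R_E]^+$ in \eqref{R_s} together with $R_E=\max_{k\in[1:K]}R_{E_k}$ in \eqref{R_ek}, and to note that $R_D$ is already available in closed form in \eqref{R_D}. Thus the whole theorem reduces to evaluating the per-eavesdropper ergodic rate $R_{E_k}=\mathbb{E}_{\gamma_{E_k}}[\log(1+\gamma_{E_k})]$ in closed form and then carrying out the maximization over $k$.

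First I would invoke Lemma \ref{lemma_gamma_ek}: for large $N$, $\gamma_{E_k}$ is approximately exponential with mean $\lambda_{E_k}$, so its density is $\tfrac{1}{\lambda_{E_k}}e^{-x/\lambda_{E_k}}$ for $x\ge 0$. Writing $\log(\cdot)=\ln(\cdot)/\ln 2$ and substituting this density gives
\begin{equation}
R_{E_k}\thickapprox\frac{1}{\ln 2}\int_0^\infty\ln(1+x)\,\frac{1}{\lambda_{E_k}}e^{-x/\lambda_{E_k}}\,{\rm d}x.
\end{equation}
I would then evaluate this integral by parts, taking $u=\ln(1+x)$ and ${\rm d}v=e^{-x/\lambda_{E_k}}\,{\rm d}x$; the boundary terms vanish and the logarithm is removed, leaving an integral of the form $\int_0^\infty\frac{e^{-\mu x}}{1+x}\,{\rm d}x=-e^{\mu}\,{\rm Ei}(-\mu)$ with $\mu=1/\lambda_{E_k}$, which is a standard identity from \cite{gradsh2000table}. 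This produces $R_{E_k}\thickapprox-\frac{1}{\ln 2}\,e^{1/\lambda_{E_k}}{\rm Ei}(-1/\lambda_{E_k})$.

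Finally I would perform the maximization. Since ${\rm Ei}(-1/\lambda_{E_k})<0$, each $R_{E_k}$ is positive and increasing in $\lambda_{E_k}$ (reflecting that a larger mean eavesdropping SNR yields a larger eavesdropping rate), so $R_E=\max_k R_{E_k}$ is attained at the strongest eavesdropper; pulling this maximum through the prefactor $-\frac{1}{\ln 2}$ and inserting the result together with \eqref{R_D} into $R_s=[R_D-R_E]^+$ yields \eqref{R_s_n}. I expect the main obstacle to be the closed-form evaluation of the logarithmic--exponential integral and, above all, the sign bookkeeping at the last step: because the positive rates $R_{E_k}$ are expressed through negative exponential-integral values, one must track carefully how the outer maximum over $k$ interacts with the negative factor $-\frac{1}{\ln 2}$ when it is moved inside the extremum. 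Every other step is a direct substitution once Lemma \ref{lemma_gamma_ek} provides the exponential approximation.
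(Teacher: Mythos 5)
Your proposal is correct and follows essentially the same route as the paper: invoke Lemma \ref{lemma_gamma_ek} to treat $\gamma_{E_k}$ as exponential with mean $\lambda_{E_k}$, reduce $\mathbb{E}[\log(1+\gamma_{E_k})]$ by integration by parts to $\frac{1}{\ln 2}\int_0^\infty \frac{e^{-x/\lambda_{E_k}}}{1+x}\,{\rm d}x$, evaluate it via the standard identity as $-\frac{1}{\ln 2}e^{1/\lambda_{E_k}}{\rm Ei}(-1/\lambda_{E_k})$, and substitute into \eqref{R_s} and \eqref{R_ek}. On the sign bookkeeping you rightly flag as the delicate step: since each $R_{E_k}$ is positive while $e^{1/\lambda_{E_k}}{\rm Ei}(-1/\lambda_{E_k})$ is negative, one has $-\max_k R_{E_k}=\frac{1}{\ln 2}\min_k e^{1/\lambda_{E_k}}{\rm Ei}(-1/\lambda_{E_k})$, so your derivation carried out literally yields a $\min$ where \eqref{R_s_n} prints a $\max$; the statement is consistent only if the extremum is understood as taken over $R_{E_k}$ (equivalently, over $\lambda_{E_k}$, selecting the strongest eavesdropper) rather than over the negative exponential-integral expression itself, a slip that is present in the paper's own formulation and that its proof glosses over with ``combining \eqref{R_s}, \eqref{R_ek} and \eqref{R_ek2}.''
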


  \begin{proof}
  When $N$ is  large, based on Lemma \ref{lemma_gamma_ek},  $R_{E_k}$ in \eqref{R_ek} can be approximated  as follows
  \begin{align}\label{R_ek2}
  R_{E_k}&\approx\int_{0}^{\infty}\log(1+x)\frac1{\lambda_{E_k}}e^{-\frac x{\lambda_{E_k}}}{\rm d}x
  =\frac{1}{\ln2}\int_{0}^{\infty}\frac{e^{-\frac x{\lambda_{E_k}}}}{1+x}{\rm d}x\notag\\
  &{=}-\frac{e^{\frac1{\lambda_{E_k}}}}{\ln2} {\rm Ei}\left(-\frac1{\lambda_{E_k}}\right),
  \end{align}
  where the last equality   is based on \cite[Eq. 3.352.4]{gradsh2000table}.
  Combining \eqref{R_s}, \eqref{R_ek} and \eqref{R_ek2},
 the theorem is proved.
  \end{proof}
\vspace{-1em}
{\subsection{Asymptotic Analysis}
  To obtain insights from the obtained ESR,  its
asymptotic behavior is analyzed  in the following corollary.
\begin{Corollary}\label{corollary_scale_N}
 \textrm{As } $N\rightarrow\infty$, $R_s\rightarrow \log N+\log A_3+\frac{\kappa}{\ln2}-\max_{k\in[1:K]}\log B_k$,
 which implies that the ESR for non-colluding Eves scales with $\log N$.
\end{Corollary}
\begin{proof}
  From \eqref{R_D}, we have
  \begin{align}\label{R_D_a}
    \! R_D&\!\approx\! \log\left(\! \rho A_3N^2 \left(\! \frac1{\rho A_3N^2}\!+\!\frac{A_1}{A_3N}\!+\!\frac{d_{SD}^{-\alpha}}{A_3N^2}\!+\!\frac{A_2\!-\!A_3}{A_3 N}\!+\!1\! \right) \right)\notag\\
    &\rightarrow 2\log N+\log\rho+\log A_3, \textrm{ as }N\rightarrow\infty.
  \end{align}
  In addition, $R_{E_k}$ in \eqref{R_ek2} can be further expressed as follows
    \begin{align}\label{R_ek3}
  R_{E_k}&\stackrel{(a)}{\approx} \frac{e^{\frac1{\lambda_{E_k}}}}{\ln2}\left(-\kappa+\ln({\lambda_{E_k}})+\sum_{i=1}^{\infty}
  \frac{(-1)^{i+1}}{i\cdot i!\cdot\lambda_{E_k}^i}\right)\notag\\
  &\stackrel{(b)}{\rightarrow}\log({\lambda_{E_k}})-\frac{\kappa}{\ln2}\notag\\
   &\stackrel{(c)}{=}\log\left(N\rho B_k\right)+\log\left(1+\frac{d_{SE_k}^{-\alpha}}{NB_k}\right)-\frac{\kappa}{\ln2}
   \notag\\
    &\rightarrow \log N+\log \rho+\log B_k-\frac{\kappa}{\ln2}, \textrm{ as }N\rightarrow\infty.
    \end{align}
     where $(a)$  is based on \eqref{R_ek2} and \cite[Eq. 8.214.1]{gradsh2000table},  $(b)$ holds since $1/\lambda_{E_k}\rightarrow 0$ as $N\rightarrow\infty$, and $(c)$ is based on the definition of $\lambda_{E_k}$ in Lemma \ref{lemma_gamma_ek}.

     Combining \eqref{R_s}, \eqref{R_ek}, \eqref{R_D_a} and \eqref{R_ek3}, the corollary follows.
\end{proof}
}

  \begin{Remark}
   { Compared with the scaling law  $2\log N$ for non-secrecy transmission with discrete phase shifts
    \cite{IRS_Error_Capacity},
    Corollary \ref{corollary_scale_N}  shows that
    the ESR scales with  $\log N$.}
    \end{Remark}

\vspace{-1em}

\section{Colluding Eves}
When the  Eves are  colluding, they can combine their received signals for information interception.
 Based on \cite{Collud_ISIT},  $R_E$ in \eqref{R_s} 
can be expressed as follows
\begin{equation}\small\label{R_E_cd}
  R_{E}=\mathbb{E}_{\{\gamma_{E_k}\}_{k=1}^K}\log\left(1+\sum_{k=1}^K\gamma_{E_k}\right).
\end{equation}
Since
common random variables $\{g_{SR,n}\}_{n=1}^N$ are present  in every $\gamma_{E_k}$ as shown in \eqref{gamma_Ek}, $\{\gamma_{E_k}\}_{k=1}^K$ are correlated random variables.
However, the following lemma shows that such correlation is negligible for large  values of $N$.

   \begin{Lemma}\label{lemma_indep}
     $\gamma_{E_i}$ is independent of $\gamma_{E_j}$ if $N\rightarrow \infty$, $i,j\in[1:K],~ i\neq j$.
   \end{Lemma}
   \begin{proof}
   Let us define {\small{$ H_{E_k}\triangleq d_{SE_k}^{-\frac\alpha2}g_{SE_k}+\sqrt{B_k}\sum_{n=1}^N\left|g_{SR,n} g_{k,n}\right|
e^{j\psi_{k,n}}$}}. Thus,  $\gamma_{E_k}=\rho|H_{E_k}|^2$. According to Lemma \ref{lemma_phase}-a),
$\mathbb{E}\left[e^{j\psi_{k,n}}\right]=0$ and {\small{$\mathbb{E}\left[H_{E_k}\right]=0$}},  $\forall k\in[1:K].$
Moreover,  since $\psi_{i,n}$ is independent of  $\psi_{j,m}$ if $i\neq j$ or $n\neq m$,
 we have {\small{$\mathbb{E}[H_{E_i}H_{E_j}]=0$,  $\forall i\neq j.$}}
Therefore, the covariance of $H_{E_i}$ and $H_{E_j}$ is zero.
Furthermore, based on Lemma \ref{lemma_gamma_ek}, $\{H_k\}_{k=1}^K$
are uncorrelated complex Gaussian variables if $N\rightarrow \infty$, and hence $\{H_k\}_{k=1}^K$ are independent of each other.
This completes the proof.
   \end{proof}

   \vspace{-1em}
   \subsection{Ergodic Secrecy Rate}

Based on  Lemma \ref{lemma_gamma_ek} and Lemma \ref{lemma_indep}, the ESR  is provided in the following theorem.
   \begin{theorem}\label{theorem_cd}
 When $N$ is large,
 and $\lambda_{E_i}\neq \lambda_{E_j}$, $\forall i\neq j$, $i,j\in[1:N]$, the ESR for   colluding Eves
  can be approximated  as follows
  \begin{equation}\small\label{R_s_c}
R_s\!\approx\!\! \left[R_D\!+\!\frac{1}{\ln2}\sum_{i=1}^Ke^{\frac{1}{\lambda_{E_i}}}{\rm Ei}\left(-\frac{1}{\lambda_{E_i}}\right)\!\prod_{j=1,j\neq i}^K\frac{\lambda_{E_i}}{\lambda_{E_i}\!-\!{\lambda_{E_j}}}
\!\right]^+\!\!.
  \end{equation}
  \end{theorem}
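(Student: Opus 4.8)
The plan is to combine Lemma~\ref{lemma_gamma_ek} and Lemma~\ref{lemma_indep} to identify the distribution of the combined eavesdropping SNR and then to evaluate the expectation in \eqref{R_E_cd}. First I would invoke Lemma~\ref{lemma_gamma_ek} to approximate each $\gamma_{E_k}$ by an exponential random variable with mean $\lambda_{E_k}$, and Lemma~\ref{lemma_indep} to treat $\{\gamma_{E_k}\}_{k=1}^K$ as mutually independent for large $N$. Under these two approximations, $Y\triangleq\sum_{k=1}^K\gamma_{E_k}$ is a sum of $K$ independent exponential random variables with the pairwise-distinct means $\lambda_{E_1},\ldots,\lambda_{E_K}$ assumed in the statement.

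The central step is to obtain the probability density function of $Y$. Because the means are distinct, $Y$ follows a hypoexponential distribution whose density admits the partial-fraction closed form
\[
f_Y(y)=\sum_{i=1}^K\frac{1}{\lambda_{E_i}}e^{-\frac{y}{\lambda_{E_i}}}\prod_{j=1,j\neq i}^K\frac{\lambda_{E_i}}{\lambda_{E_i}-\lambda_{E_j}},\quad y\geq 0.
\]
I would derive this either by inverting the product of Laplace transforms $\prod_{i=1}^K(1+\lambda_{E_i}s)^{-1}$ via partial fractions, or by induction on $K$ using the convolution of exponential densities. The distinctness hypothesis $\lambda_{E_i}\neq\lambda_{E_j}$ is precisely what guarantees simple poles and hence the residue form above; coincident means would instead produce confluent (polynomial-in-$y$) terms and a different expression.

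With $f_Y$ in hand, I would write $R_E=\int_0^\infty\log(1+y)f_Y(y)\,{\rm d}y$, exchange the finite sum with the integral, and reduce $R_E$ to a weighted combination of integrals of the form $\frac{1}{\lambda_{E_i}}\int_0^\infty\log(1+y)e^{-y/\lambda_{E_i}}\,{\rm d}y$. Each of these is exactly the integral already evaluated in \eqref{R_ek2}, equal to $-\frac{e^{1/\lambda_{E_i}}}{\ln2}{\rm Ei}(-1/\lambda_{E_i})$ by \cite[Eq. 3.352.4]{gradsh2000table}. Collecting the partial-fraction weights then gives
\[
R_E\approx-\frac{1}{\ln2}\sum_{i=1}^Ke^{\frac{1}{\lambda_{E_i}}}{\rm Ei}\!\left(-\frac{1}{\lambda_{E_i}}\right)\prod_{j=1,j\neq i}^K\frac{\lambda_{E_i}}{\lambda_{E_i}-\lambda_{E_j}}.
\]

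Finally, substituting this $R_E$ together with $R_D$ from \eqref{R_D} into the definition $R_s=[R_D-R_E]^+$ of \eqref{R_s} yields \eqref{R_s_c}. I expect the main obstacle to be establishing the hypoexponential density and justifying its use under the two large-$N$ approximations supplied by Lemmas~\ref{lemma_gamma_ek} and~\ref{lemma_indep}; once that density is fixed, the remaining manipulations simply reuse the single-Eve integral from the non-colluding analysis and are routine.
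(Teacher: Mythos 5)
Your proposal is correct and follows essentially the same route as the paper: both use Lemma \ref{lemma_gamma_ek} and Lemma \ref{lemma_indep} to model $\sum_{k=1}^K\gamma_{E_k}$ as a sum of independent exponentials with distinct means, invoke the hypoexponential (partial-fraction) density --- which the paper simply cites from \cite{ross2010proba} while you sketch its derivation --- and then reduce $R_E$ to the single-Eve integral of \eqref{R_ek2} before substituting into \eqref{R_s}. The extra detail you give on deriving the density is fine but not needed beyond the citation.
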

    \begin{proof}
     Based on Lemma \ref{lemma_gamma_ek} and Lemma \ref{lemma_indep}, if $\lambda_{E_i}\neq \lambda_{E_j}$, $\forall i\neq j$,
      $\sum_{k=1}^K\gamma_{E_k}$ has the following probability density function (PDF)
      \cite{ross2010proba}:
    \begin{equation}\small\label{pdf_distri}
      f_{\sum_{k=1}^K\gamma_{E_k}}(x)\!=\!\sum_{i=1}^K\frac{1}{\lambda_{E_i}}e^{-\frac{x}{\lambda_{E_i}}}
      \prod_{j=1,j\neq i}^K\frac{\lambda_{E_i}}{\lambda_{E_i}\!-\!{\lambda_{E_j}}}.
    \end{equation}

 Combining  \eqref{R_ek2}, \eqref{R_E_cd} and \eqref{pdf_distri}, we obtain 
    \begin{equation}\small\label{R_E_cd2}
      R_E\thickapprox-\frac1{\ln2}\sum_{i=1}^K{e^{\frac1{\lambda_{E_i}}}{\rm Ei}\left(-\frac{1}{\lambda_{E_i}}\right)} \prod_{j=1,j\neq i}^K\frac{\lambda_{E_i}}{\lambda_{E_i}-{\lambda_{E_j}}}
      .
    \end{equation}
  Recalling \eqref{R_s}, the theorem is proved. 
     \end{proof}

 \begin{Remark}
 {Theorem \ref{theorem_cd} corresponds to the case that the Eves lie in different locations,
so that  $\{\gamma_{E_k}\}_{k=1}^K$ have different means.
When the  Eves are clustered relatively
closely together,  $\{\gamma_{E_k}\}_{k=1}^K$ have the same (or a very similar)  mean. In this case,
  the ESR
 can be analyzed in a similar way, whose details are not provided due to space limitations.}
  \end{Remark}
\vspace{-1em}
  {\subsection{Asymptotic Analysis}
 The
asymptotic behavior of the obtained ESR for colluding Eves is provided in the following corollary.
\begin{Corollary}\label{corollary_scale_N2}
   As $N\rightarrow\infty$, $R_s\rightarrow \log N+\log A_3+\frac{\kappa}{\ln2}-\sum_{i=1}^K\log B_i\prod_{j=1,j\neq i}^K\frac{B_i}{B_i-B_j}$, which implies that the ESR for colluding Eves also scales with $\log N$.
\end{Corollary}
\begin{proof}
  From \eqref{R_ek2}, \eqref{R_ek3} and \eqref{R_E_cd2}, we have
  \begin{align}\label{R_E_cd3}
    R_E&\rightarrow \!\sum_{i=1}^K \left(\log N\!+\!\log \rho\!-\!\frac{\kappa}{\ln2}\!+\!\log B_i \right) \!\prod_{j=1,j\neq i}^K\frac{\lambda_{E_i}}{\lambda_{E_i}\!-\!{\lambda_{E_j}}}\notag\\
    &\stackrel{(a)}\rightarrow \sum_{i=1}^K \left(\log N\!+\!\log \rho\!-\!\frac{\kappa}{\ln2}\!+\!\log B_i \right) \prod_{j=1,j\neq i}^K\frac{B_{i}}{B_{i}\!-\!{B_{j}}}\notag\\
   &\!\stackrel{(b)}{=}\!\log N\!+\!\log \rho\!-\!\frac{\kappa}{\ln2}\!+\!\sum_{i=1}^K \log B_i \!\prod_{j=1,j\neq i}^K\frac{B_{i}}{B_{i}\!-\!{B_{j}}},
  \end{align}
  where $(a)$ holds since $\small{\lambda_{E_k}=\rho NB_k\left(1+\frac{d_{SE_k}^{-\alpha}}{NB_k}\right)\rightarrow \rho NB_k}$  as $N\rightarrow\infty$,
   and $(b)$ is based on the fact that $\sum_{i=1}^K\prod_{j=1,j\neq i}^K\frac{B_{i}}{B_{i}\!-\!{B_{j}}}=1$,
  as proved in \cite[Chapter 5]{ross2010proba}.

  \hspace{-0em}Combining \eqref{R_s}, \eqref{R_D_a}, \eqref{R_E_cd} and \eqref{R_E_cd3},  the proof follows.
\end{proof}
\begin{Remark}\label{remark_difference}
  Comparing Corollaries \ref{corollary_scale_N} and \ref{corollary_scale_N2}, we evince  that
only the last terms  for  the asymptotic ESR  are different,
  i.e., $\max_{k\in[1:K]}\log B_k$ and $\sum_{i=1}^K\log B_i\prod_{j=1,j\neq i}^K\frac{B_i}{B_i-B_j}$
   for  non-colluding and colluding Eves, respectively. In addition, the ESRs for
    both non-colluding and colluding Eves
   have the same scaling law, i.e.,  $\log N$.
\end{Remark}
}
%

  \begin{figure}[tbp]
 \begin{center}
{
\includegraphics[width=0.9\columnwidth]{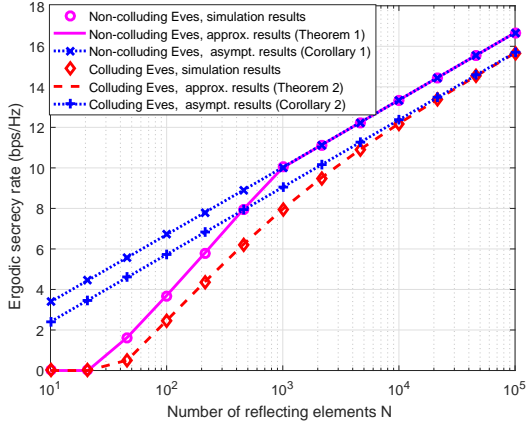}}
\end{center}
\vspace*{-1.5em}
\caption{Ergodic secrecy rate vs. $N$, for $K=5$.}\label{Capacity_N}\vspace{-0.5em}
\end{figure}

  \vspace{-1em}
  \subsection{Large Number of Eves}
  To obtain more insights from the obtained ESR, we provide a simplified
  expression of $R_s$ for large values of $K$ in the following corollary.
  \begin{Corollary}\label{corollary_largeK}
When $N$ and $K\rightarrow \infty$, the ESR for colluding Eves
 can be approximated  as follows
\begin{equation}\small\label{R_s_c3}
R_s\approx \left[R_D-\log\left(1+\sum_{k=1}^K{\lambda_{E_k}}\right)\right]^+.
  \end{equation}
  \end{Corollary}
\begin{proof}
 Based on Lemma \ref{lemma_indep}, $\small{{\rm Var}\left[\sum_{k=1}^K\gamma_{E_k}\right]\rightarrow
 \sum_{k=1}^K\lambda_{E_k}^2}$ as $N\rightarrow \infty$.
 Thus, $\small{\frac{{\rm Var}\left[\sum_{k=1}^K\gamma_k\right]}{\left(\mathbb{E}\left[\sum_{k=1}^K\gamma_k\right]\right)^2}
     \rightarrow \frac{\sum_{k=1}^K\lambda_{E_k}^2}{\left(\sum_{k=1}^K\lambda_{E_k}\right)^2}\rightarrow0}$
     as $N$ and $K\rightarrow \infty$. According to \cite[Theorem 4]{sanayei2007opportun},
    $R_E$ in \eqref{R_E_cd} can be approximated as 
       {\small{$R_E\approx \log\left(1+\mathbb{E}\left[\sum_{k=1}^K\gamma_k\right]\right).$}}
     This completes the proof.
     \end{proof}



\vspace{-1em}
\section{Numerical Results}
In this section, numerical results are provided to verify the analytical results stated in the   theorems
and corollaries. 
 For illustrative purposes,
we set $\alpha=3$, $b=3$ bits, $P=20$ dBm, $\sigma^2=-96$ dBm and $\eta=0.8$. In addition, $S$, $R$ and $D$ are located at
$(0,0)$ m, $(100,0)$ m and $(90,20)$ m, respectively.
As for  the Eves, $E_k$  is located at
$\left(\frac{90k}{K},-20\right)$ m, $k\in[1:K]$.
   For the considered simulation setup, the ESR is zero in the absence of  RIS.
 In this case, in fact, the ESR for non-colluding Eves can be expressed as {\small{$R_{s} =\left[f_{3}(d_{SD})-\max_{k\in[1:K]}f_{3}(d_{SE_k})\right]^+,$}}
 where  $f_3(x)\triangleq \frac{1}{\ln2}\int_{0}^{\infty}{e^{-\frac t{\rho x^{-\alpha}}}}/{(1+t)}{\rm d}t$, $x>0$. Therefore, $R_{s}=0$
     since $f_3(x)$ is a decreasing function of $x$
    and $d_{SD}\geq d_{SE_k}$ in the considered network  configuration, $\forall k\in[1:K]$.
%
%

\begin{figure}[tbp]
 \begin{center}
{
\includegraphics[width=0.9\columnwidth]{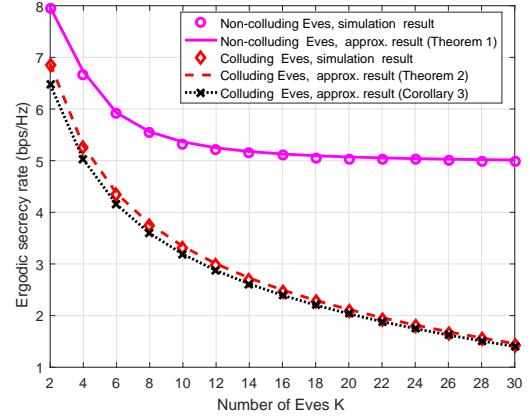}}
\end{center}
\vspace*{-1.5em}
\caption{Ergodic secrecy rate  vs. $K$, for $N=250$.}\label{Capacity_K}
\end{figure}

Fig. \ref{Capacity_N} shows the impact of the number of reconfigurable  elements $N$ on the ESR, when the number of Eves is $K=5$.
We observe  that the approximated analytical results in Theorems \ref{theorem_non} and  \ref{theorem_cd} match well with
 Monte Carlo simulations almost for all values of $N$. This is because   the ESR is zero for small values of $N$,  and starts to  be  positive  for large values of $N$, i.e.,
 $N\geq 46$ in the considered case.
 We also  observe that the ESR increases with $N$.
 For example,  the  ESRs are about $3.7$ bps/Hz and $2.5$ bps/Hz  for non-colluding and colluding Eves,
 respectively, if $N=100$.
 In addition,  the  analytical results obtained in Corollaries \ref{corollary_scale_N} and  \ref{corollary_scale_N2} asymptotically approach the simulations
 as $N$ becomes sufficiently large, which confirms the scaling laws.
The setup with non-colluding
Eves provides a larger secrecy rate  since only the ``best'' Eve  determines  the 
ESR. There exists a constant gap of  about $1$ bps/Hz between the ESRs for  non-colluding and colluding Eves
if $N$ exceeds $10^4$.

In Fig. \ref{Capacity_K}, the  ESR as a function of the number of Eves $K$ for $N=250$ is shown.
The figure confirms the findings in Corollary \ref{corollary_largeK} for colluding Eves, and we observe that the
approximation in \eqref{R_s_c3} becomes
tighter as $K$ increases. 
 The ESR for non-colluding Eves is less affected  by  $K$. We observe, in particular, that there exists an ESR floor of about $5$ bps/Hz for large values of $K$.
  This is because the ESR for non-colluding Eves is determined by the nearest Eve to the source. In
the considered simulation setup, the nearest Eve is located at around $(0,  -20)$ m, when $K$ is large.




\vspace{-1em}
\section{Conclusions}
This letter investigated the ESR of an RIS-assisted communication system in the presence of
discrete phase shifts and
multiple Eves. We obtained an approximated closed-form expression of the
 ESR  
  and unveiled that the ESR   scales with $\log N$
 in the presence of both non-colluding and colluding Eves.
An  interesting   future direction is to analyze the  ESR of  RIS-assisted systems in
the presence of  multi-antenna transmitters.
For example, the recent research works in \cite{IRS_Marco,Marco_TWC} could be generalized in order to take into account security constraints.


\vspace{-1em}
 \appendices
\section{Proof of Lemma \ref{lemma_phase}}\label{proof_lemma_phase}
Let us denote $X_{1,n}\triangleq f_2(\phi_n^*,\theta_{SR,n})$, $X_{2,n}\triangleq\theta_{k,n}$
and $Y_{n}\triangleq \psi_{k,n}$. Accordingly, $Y_{n}=X_{1,n}+X_{2,n}$ as shown in Section \ref{section_model}.
We note that the random phases $X_{1,n}$, $X_{2,n}$ and $Y_{n}$
have a uniform circular
distribution.

Given $X_{1,n}=x_1$, $\forall x_1\in [0,2\pi)$, $Y_{n}=x_1+X_{2,n}$ is uniformly distributed in $[x_1,x_1+2\pi)=[x_1,2\pi)\cup [2\pi,x_1+2\pi)$.
       Since $Y_{n}$ has a circular uniform distribution, $[2\pi,x_1+2\pi)$ is equivalent to
        $[0,x_1)$. Thus, $Y_{n}$ is uniformly distributed
in $[0,2\pi)$, which proves Lemma \ref{lemma_phase}-a).

Since $X_{1,n}$ and $X_{2,n}$ are independent, their joint PDF is
\begin{equation}\small
f_{X_{1,n},X_{2,n}}(x_1,x_2)\!=\!f_{X_{1,n}}(x_1)f_{X_{2,n}}(x_2)\!=\!\frac{1}{2\pi}f_{X_{1,n}}(x_1).
\end{equation}
We can construct the following Jacobian matrix
\begin{equation}\small
  J_{X_{1,n},Y_{n}}(x_1,x_2)=\left[\begin{matrix}
    \frac{\partial x_1}{\partial x_1},&\frac{\partial x_1}{\partial x_2}\\
    \frac{\partial y}{\partial x_1},&\frac{\partial y}{\partial x_2}
  \end{matrix}\right]=\left[\begin{matrix}
    1,&0\\1,&1
  \end{matrix}\right].
\end{equation}
Thus, the joint PDF of $X_{1,n}$ and $Y_{n}$ can be written as
 \begin{small} \begin{eqnarray}
f_{X_{1,n},Y_{n}}(x_1,y)&=&\frac{f_{X_{1,n},X_{2,n}}(x_1,x_2)}{\det(J_{X_{1,n},Y_{n}}(x_1,x_2))}
  =\frac{1}{2\pi}f_{X_{1,n}}(x_1)\notag\\
  &=&f_{X_{1,n}}(x_1)f_{Y_{n}}(y),
\end{eqnarray}\end{small}
\hspace{-0.2em}which implies  that $Y_{n}$ is independent of $X_{1,n}$.
Thus, Lemma \ref{lemma_phase}-b) is proved.



For $\forall i\neq j$ and $i,j\in[1:N]$, we have $Y_{i}=X_{1,i}+X_{2,i}$ and  $Y_{j}=X_{1,j}+X_{2,j}$.
Although the same random phase $\theta_{SD}$ is present  in both $Y_i$ and $Y_j$ as shown in Section \ref{section_model}, $Y_i$ is still   independent of  $Y_j$, due to the following two facts:
(i)
 $Y_{i}$ and $Y_j$ are independent of $X_{1,i}$
and $X_{1,j}$, respectively, according to Lemma \ref{lemma_phase}-b); (ii)
$X_{2,i}$ is independent of $X_{2,j}$.
Therefore, Lemma \ref{lemma_phase}-c) is proved.

\bibliographystyle{IEEEtran}
\bibliography{refference}

\end{document}